\documentclass[conference]{IEEEtran}
\IEEEoverridecommandlockouts
\usepackage{cite}
\usepackage{amsmath,amssymb,amsfonts}
\usepackage{algorithmic}
\usepackage{algorithm}
\usepackage{graphicx}
\usepackage[caption=false,font=footnotesize]{subfig}
\usepackage{textcomp}
\usepackage[dvipsnames,svgnames,table]{xcolor}
\usepackage{booktabs}
\usepackage{makecell}
\newcommand{\methodname}{HERD}
\newcommand{\method}{\textbf{\texttt{\methodname}}}
\newcommand{\gray}[1]{\textcolor{gray}{#1}}
\newcommand{\venue}[1]{{\scriptsize\textcolor{gray!70!black}{[#1]}}}
\colorlet{bestcell}{red!13}
\colorlet{secondcell}{blue!10}
\newcommand{\best}[1]{\cellcolor{bestcell}\textbf{#1}}
\newcommand{\second}[1]{\cellcolor{secondcell}\textbf{#1}}
\DeclareRobustCommand{\swatch}[1]{{\setlength{\fboxsep}{0pt}\setlength{\fboxrule}{0.4pt}\raisebox{-0.2ex}{\fcolorbox{gray!80}{#1}{\rule{0pt}{1.8ex}\hspace{5ex}}}}}
\newcommand{\band}[1]{\gray{\textit{#1}}}
\definecolor{hdrA}{HTML}{8A1F3D}
\definecolor{hdrB}{HTML}{16479B}
\definecolor{hdrC}{HTML}{15683A}
\definecolor{hdrD}{HTML}{4D4D4D}
\definecolor{hdrE}{HTML}{5B3F8C}
\newcommand{\hd}[2]{\cellcolor{#1}\textcolor{white}{\textbf{#2}}}
\definecolor{oursname}{HTML}{FBE5D0}

\newcommand{\std}[1]{{\scriptsize\textcolor{ForestGreen}{($\pm$#1)}}}
\usepackage{tcolorbox}
\usepackage{ntheorem}
\theoremseparator{:}

\newtheorem{assumption}{Assumption}

\newtheorem{proposition}{Proposition}
\newtheorem{theorem}{Theorem}

\theoremstyle{nonumberplain}
\theorembodyfont{\normalfont}
\newtheorem{proof}{Proof}
\usepackage{hyperref}
\usepackage{balance}

\def\BibTeX{{\rm B\kern-.05em{\sc i\kern-.025em b}\kern-.08em
    T\kern-.1667em\lower.7ex\hbox{E}\kern-.125emX}}
\begin{document}
\bstctlcite{IEEEexample:BSTcontrol}

\title{Tracing Knowledge Lineage for Federated Unlearning under Mutual Distillation}

\author{
\IEEEauthorblockN{Zihao Ding\textsuperscript{1}, Jun Huang\textsuperscript{1}, and Guodong Wang\textsuperscript{2}}
\IEEEauthorblockA{\textsuperscript{1}Department of Electrical Engineering and Computer Science, South Dakota State University, Brookings, SD 57007, USA\\
\textsuperscript{2}Department of Computer Science, Massachusetts College of Liberal Arts, North Adams, MA 01247, USA\\
Email: zihao.ding@jacks.sdstate.edu, jun.huang@sdstate.edu, guodong.wang@mcla.edu}
}

\maketitle

\begin{abstract}
An edge server and nearby devices can train language models of different sizes together by mutual distillation. Federated unlearning (FU) and unlearning for large language models (LLMs) execute a removal request on one model. Under mutual distillation, however, the effect of the removed data has been copied into many students, and their uploads teach it back to the corrected teacher. We propose \underline{H}older-level \underline{E}rasure via \underline{R}eplication-weighted \underline{D}elta-closure (\method{}), which decides which holders must be corrected. \method{} records a knowledge lineage graph from the transfers that the server already performs and weights each edge by its transfer type. It computes how many times the residual at each holder will be copied and selects the cheapest set of holders whose correction bounds the cumulative residual after the request. A transfer guard and a rejoin rule cover the holders left uncorrected or offline. We prove that partial correction suffices only when the copy number of the lineage graph is below one, that averaging alone never reaches one, and that the correction set remains valid under link outages.
On TOFU, \method{} leaves 18\% to 52\% less cumulative residual than the strongest baseline at 6\% to 12\% of its correction cost.
\end{abstract}

\begin{IEEEkeywords}
Federated unlearning, knowledge distillation, large language models, edge intelligence.
\end{IEEEkeywords}

\section{Introduction}
\label{sec:intro}

An edge server can host a large language model (LLM), while nearby devices run smaller models that fit their memory and compute~\cite{Wu2026TNSE,Lin2025COMMAG,Wu2026arXivREBASE,Wu2026arXivRouting,Wu2026arXivCrystalMem,Huang2020TWC,Huang2019WCM}. Federated learning (FL) lets these devices learn from local data without sharing it~\cite{Wu2026COMST,Wu2026IoTJ,Wu2026TON,Wu2026Network,Ding2025IPCCC,Ding2026TAAS,Che2027PAKDD}. Models of different architectures cannot be averaged, so the federation exchanges knowledge by mutual distillation~\cite{Itahara2023TMC,He2020NIPS,Fan2025COLING}. The server broadcasts the outputs of its large model, the teacher, on public prompts, and each device trains its small model, the student, on them. The devices upload the outputs of their students, and the teacher learns from them~\cite{Wu2026ICDCSAlpha,Wu2026ARXIV1,Wu2026ARXIVPRISM}.

Federated unlearning (FU)~\cite{Ding2026ICDCS,Wu2026INFOCOM,Ding2026MNET,Pudasaini2026HPSRAgri,Ding2026ARXIVEASE,Ding2026arXivMUTE} and LLM unlearning~\cite{Fan2025NIPS} remove the effect of data whose owner asks for deletion~\cite{Che2026arXiv} or of samples found to be poisoned~\cite{Ding2026ICNC}. Both execute the removal request on one model, typically the global or server model, and treat the correction of that model as complete removal. Mutual distillation breaks this assumption. By the time a request arrives, distillation has copied the effect of the removed data into many students, and the server has cached teacher outputs that carry it. After the teacher is corrected, the students keep uploading outputs shaped by the removed data, and the teacher learns it again. We call this return resurrection. Some students are also offline when the request arrives and rejoin later with their old copies.

Existing work that looks beyond one model does not close this gap. Lineage-based unlearning~\cite{Liu2025TIFS} corrects every descendant of the removed data, but after a few rounds of mutual distillation many holders are descendants, and correcting all of them spends cost on residuals that would fade anyway. Unlearning methods for distillation~\cite{Quan2025NIPS} shard the data in advance, which covers the hop from teacher to student but not the return path. Noising the descendants and distilling them again from the corrected teacher~\cite{Lee2025NIPS} resends the whole distillation set to each of them. None of these methods asks which holders must be corrected to keep the removed knowledge out, or what to do with offline holders.

The mutual distillation loop, however, carries the signals needed to answer this question. Every transfer passes through the server, so the server can record which holder received knowledge from which at no extra cost. Transfers also differ in how they move a residual. Averaging mixes models with weights that sum to one, so it spreads a residual but never enlarges it. Distillation writes the behavior of one model into several students while the sender keeps it, so it copies the residual. Whether a holder must be corrected therefore depends on how many times its residual will be copied, not on how far it lies from the source. The server can estimate these copies from its own records, without sending the removed data to any device.

Motivated by these insights, we design Holder-level Erasure via Replication-weighted Delta-closure (\method{}). \method{} records a knowledge lineage graph from the transfers of each round and weights each edge by its transfer type. It computes a copy weight for every holder, the exposure that one unit of its residual creates through all later transfers, including those that return to the teacher. It then selects the $\delta$-closure, the cheapest set of holders whose correction keeps the cumulative residual after the request below $\delta$. A transfer guard screens later transfers, and a rejoin rule corrects offline holders before their first transfer. The removed data never leave the server.

Our contributions are summarized as follows:
\begin{itemize}
\item We formulate FU under mutual distillation as selecting the cheapest set of holders whose correction bounds the cumulative residual after the request, and we identify the resurrection of the teacher through student uploads.
\item We propose \method{}, which records a typed lineage graph from delivered transfers, selects holders by copy weight, and guards uncorrected and offline holders without querying any device that still holds the removed data.
\item We prove that correcting part of the holders suffices only when the copy number of the lineage graph is below one, that averaging alone never reaches one, and that the selected holder set stays valid under link outages.
\item On TOFU, \method{} leaves 18\% to 52\% less cumulative residual than the strongest baseline at 6\% to 12\% of its correction cost, and every model stays within one point of its retrained counterpart in the forget score.
\end{itemize}

\section{System Model and Problem Formulation}
\label{sec:system}

\begin{figure}[t]
  \centering
  \includegraphics[width=0.95\columnwidth]{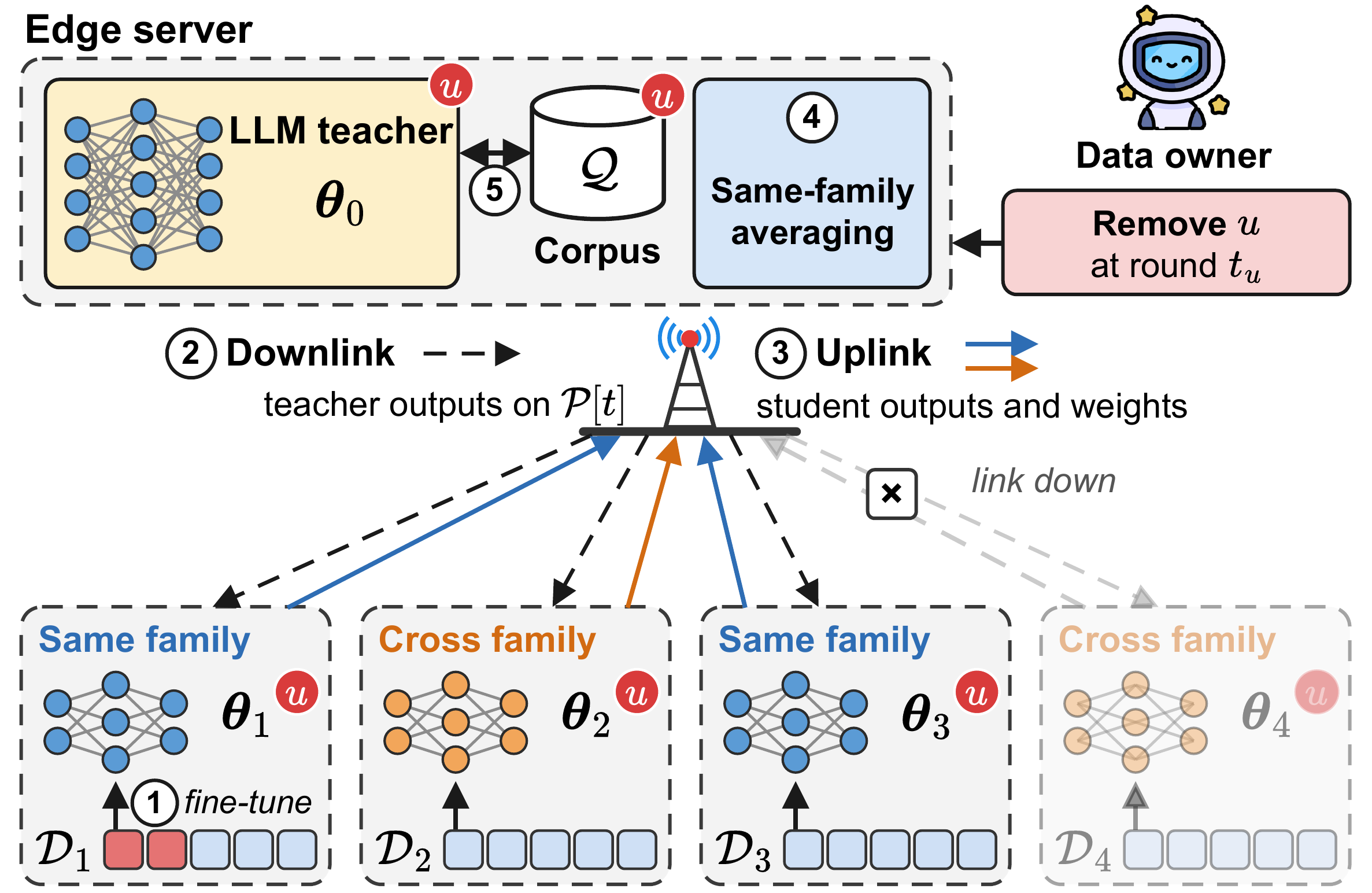}
  \caption{The system model of holder-level unlearning under mutual distillation.}
  \label{fig:system}
\end{figure}

\begin{figure*}[t]
  \centering
  \begin{minipage}[t]{0.48\textwidth}
    \centering
    \includegraphics[width=\linewidth]{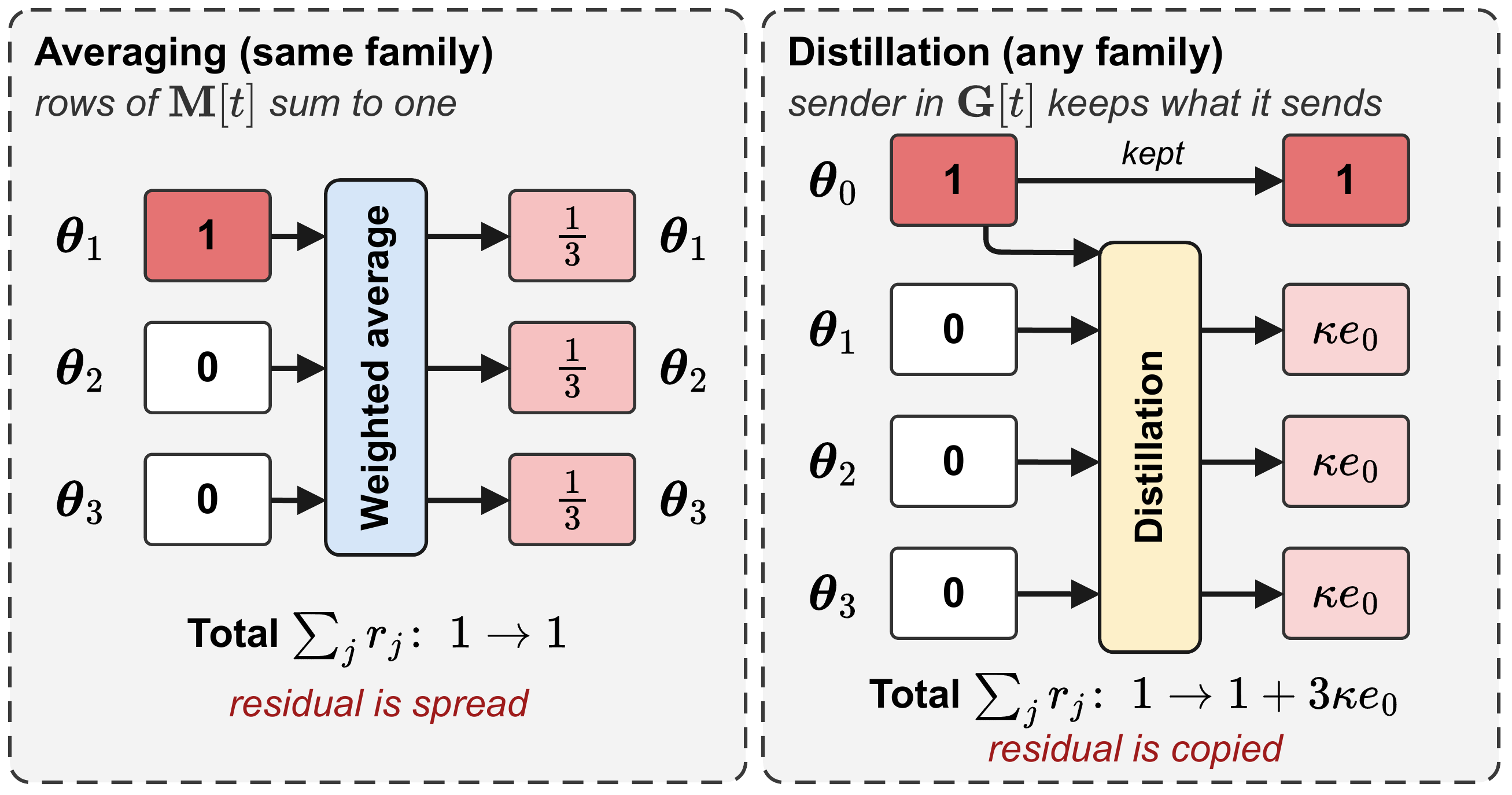}
    \caption{Averaging spreads a residual, while distillation copies it.}
    \label{fig:copy}
  \end{minipage}
  \hfill
  \begin{minipage}[t]{0.48\textwidth}
    \centering
    \includegraphics[width=\linewidth]{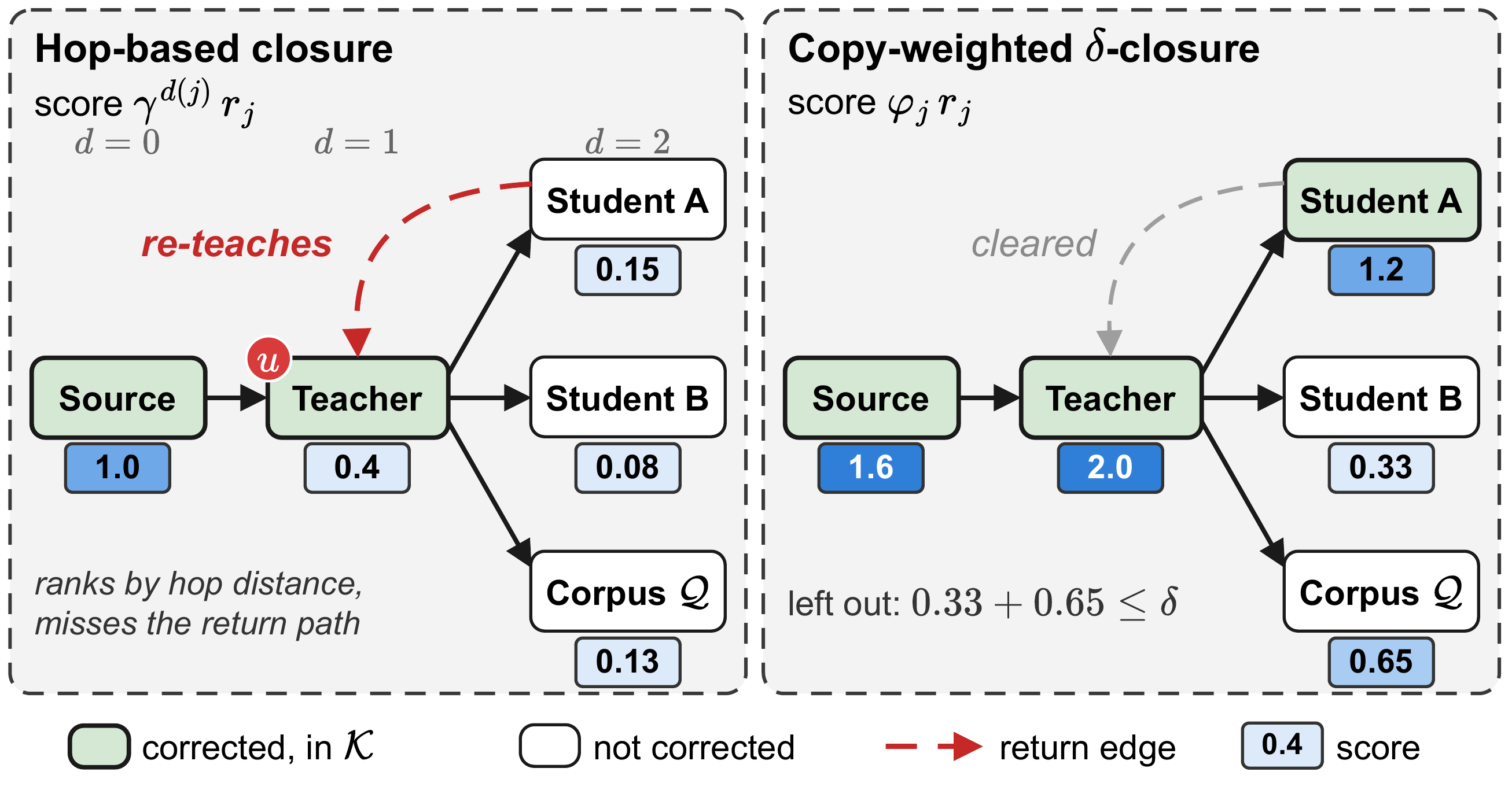}
    \caption{Hop-based and copy-weighted closures with a return edge.}
    \label{fig:closure}
  \end{minipage}
\end{figure*}

\subsection{System Model}
\label{subsec:system-model}

We consider an edge server and $N$ clients connected over wireless links, as illustrated in Fig.~\ref{fig:system}. The server holds a large teacher model $\boldsymbol{\theta}_0$, and client $k$ holds a smaller student model $\boldsymbol{\theta}_k$ that fits its hardware. A student is same-family if it shares the base model of the teacher and cross-family otherwise, and we write $i\sim j$ when two models are of the same family. Client $k$ keeps a local dataset $\mathcal{D}_k$ on board. Since models of different architectures cannot be averaged, the federation exchanges knowledge by mutual distillation on a public prompt set $\mathcal{P}[t]$.

The link of client $k$ has bandwidth $B$, and its rate in round $t$ is
\begin{equation}
    b_k[t]\;=\;B\log_2\!\left(1+\frac{p_k\,h_k[t]}{\sigma^{2}}\right),
    \label{eq:rate}
\end{equation}
where $p_k$ is the transmit power, $\sigma^{2}$ is the noise power, and the channel gain $h_k[t]$ follows Rayleigh fading around the path-loss gain $\bar{h}_k$. The link is down in round $t$ if $b_k[t]$ falls below the rate $b^{\textrm{min}}$ needed to deliver the transfers of a round, which happens with probability
\begin{equation}
    \Pr\{b_k[t]<b^{\textrm{min}}\}\;=\;1-\exp\!\left(-\frac{(2^{b^{\textrm{min}}/B}-1)\,\sigma^{2}}{p_k\,\bar{h}_k}\right).
    \label{eq:outage}
\end{equation}
We denote by $\beta_k$ the probability that the link of client $k$ is down in a round, which also covers the client leaving the coverage of the server. Let $\mathcal{A}[t]$ denote the clients whose links are up in round $t$. In round $t$, every client fine-tunes its student on $\mathcal{D}_k$. The server broadcasts the outputs of $\boldsymbol{\theta}_0[t]$ on $\mathcal{P}[t]$, and each client in $\mathcal{A}[t]$ trains its student on them. Each client in $\mathcal{A}[t]$ then uploads the outputs of its student on $\mathcal{P}[t]$, and the server trains the teacher on them. The server also averages the students of same-family clients in $\mathcal{A}[t]$ with weights proportional to their data sizes. Finally, the server adds the broadcast outputs to a cached corpus $\mathcal{Q}$ and replays part of it in later rounds. The broadcast is one downlink transmission shared by all clients in $\mathcal{A}[t]$, while each upload occupies the link of its own client. A client whose link is down misses the transfers of that round and rejoins with its own student when the link returns.

We call the teacher, each student, and the cached corpus a holder, since each of them can carry learned knowledge. Let $\mathcal{H}$ denote the set of $N+2$ holders, indexed by $i$ and $j$. A removal request names a target $u$ and carries its data $\mathcal{D}_u$ to the server. The target can be all data of one client, a group of entities spread over several clients, or a set of poisoned samples, which correspond to client, class, and sample unlearning. The request arrives at round $t_u$.

\subsection{Problem Formulation}
\label{subsec:problem-formulation}

Conventional FU and LLM unlearning execute a removal request on one model, and the procedure ends once that model is corrected. This does not hold under mutual distillation. By round $t_u$, the effect of $u$ sits in the teacher, in the students that distilled from it, and in the cached corpus. If only the teacher is corrected, the students keep uploading outputs shaped by $u$, and the teacher learns $u$ again.

Let $r_j[t]\in[0,1]$ denote the fraction of the effect of $u$ that remains in holder $j$ at round $t$. It is $0$ for a model trained without $u$ and $1$ for the holder before any removal. Every transfer of round $t$ passes part of a residual from its sender to its receiver, so the residual vector $\mathbf{r}[t]$ evolves as
\begin{equation}
    \mathbf{r}[t+1]\;=\;\mathbf{W}[t]\,\mathbf{r}[t],
    \label{eq:residual}
\end{equation}
where the inheritance matrix $\mathbf{W}[t]$ has entry $w_{ij}[t]$ equal to the fraction of the residual of holder $j$ that holder $i$ inherits in round $t$. A diagonal entry is the part that a holder keeps after its own training, and a client whose link is down has no off-diagonal entry in that round.

At round $t_u$, the server selects a set $\mathcal{K}\subseteq\mathcal{H}$ of holders to correct, which sets their residuals to zero. Correcting holder $j$ costs $c_j=s_j/b_j[t_u]+c_j^{\textrm{cmp}}$, where $s_j$ is the size of the correction sent to $j$ and $c_j^{\textrm{cmp}}$ is its computing time. The teacher and the corpus are corrected on the server, so their $s_j$ is zero. The objective is to minimize the correction cost while bounding the cumulative residual after the request:
\begin{equation}
\begin{split}
    (\mathrm{P}):\;\;\min_{\mathcal{K}\subseteq\mathcal{H}}\;\;
    & \sum_{j\in\mathcal{K}} c_j \\
    \mathrm{s.t.}\;\;
    & (\mathrm{C}_1):\;\sum_{t\ge t_u}\mathbf{1}^{\top}\mathbf{r}[t]\;\le\;\delta,
\end{split}
\label{eq:problem}
\end{equation}
where $\mathbf{1}$ is the all-ones vector. The constraint counts every holder in every later round, so a residual that returns through later transfers violates it even if $\mathcal{K}$ clears the teacher at round $t_u$. We call the solution of $(\mathrm{P})$ the $\delta$-closure of $u$.

Problem~$(\mathrm{P})$ cannot be solved directly. The residuals $r_j[t_u]$ are not observable, and measuring them would require querying every holder with $\mathcal{D}_u$, which sends the removed data to the whole federation. Holders whose links are down at $t_u$ can be neither measured nor corrected. The inheritance matrix is not fixed either, because a transfer moves a residual differently depending on its type and on whether $\mathcal{P}[t]$ touches $u$.
\section{\texorpdfstring{\method{}}{HERD}: Holder-level Erasure via Replication-weighted Delta-closure}
\label{sec:method}

To solve $(\mathrm{P})$ in Eq.~\eqref{eq:problem}, we design \underline{\textbf{H}}older-level \underline{\textbf{E}}rasure via \underline{\textbf{R}}eplication-weighted \underline{\textbf{D}}elta-closure (\method{}), whose three modules are described in Sections~\ref{subsec:klg}, \ref{subsec:closure}, and~\ref{subsec:guard}. Fig.~\ref{fig:arch} shows when each module runs. Fig.~\ref{fig:copy} and Fig.~\ref{fig:closure} illustrate the two ideas that the equations alone leave implicit.

\subsection{Typed Knowledge Lineage Graph}
\label{subsec:klg}

This module estimates the inheritance matrix of Eq.~\eqref{eq:residual} from what the server already does. Every transfer in Section~\ref{subsec:system-model} is a transmission that the server either sends or receives. After each round, the server records the broadcast outputs and the clients that acknowledged them, the uploaded outputs, the averaging weights, and the entries added to $\mathcal{Q}$. These records form a knowledge lineage graph whose nodes are the holders and whose edges are the delivered transfers. Keeping the graph adds no transmission, and the server stores outputs and weights rather than model checkpoints.

A transfer moves a residual in a way that depends on its type. We therefore factor the inheritance matrix as
\begin{equation}
    \mathbf{W}[t]\;=\;\boldsymbol{\Lambda}\left(\mathbf{M}[t]+\mathbf{G}[t]\right),
    \label{eq:factor}
\end{equation}
where $\boldsymbol{\Lambda}=\mathrm{diag}(\lambda_j)$ holds the fraction $\lambda_j<1$ that holder $j$ keeps after its own training. The averaging matrix $\mathbf{M}[t]$ has the averaging weights $m_{ij}[t]$ in the row of each averaged student and a unit row for every other holder, so each of its rows sums to one. The distillation matrix $\mathbf{G}[t]\ge\mathbf{0}$ collects the distillation edges, whose sender keeps what it sends.

The weight of a distillation edge depends on how much of $u$ the transmitted outputs carry. Let $e_j[t]\in[0,1]$ denote the excitation of holder $j$ in round $t$, the fraction of its transmitted outputs that carry the content of $u$. After the request, the server obtains $e_j[t]$ by matching its records against $\mathcal{D}_u$. The edge from $j$ to $i$ then has the weight
\begin{equation}
    g_{ij}[t]\;=\;
    \begin{cases}
        \kappa^{\textrm{x}}\,e_j[t], & i\not\sim j,\\
        \kappa^{\textrm{s}}\,e_j[t]+\nu, & i\sim j,
    \end{cases}
    \label{eq:edge}
\end{equation}
where $\kappa^{\textrm{x}}$ and $\kappa^{\textrm{s}}$ are the absorption rates of cross-family and same-family distillation. The floor $\nu\ge 0$ covers traits that pass between models of the same family without appearing in the outputs~\cite{Cloud2026Nature}. A replay of $\mathcal{Q}$ is a distillation edge whose sender is the corpus. The server fits $\kappa^{\textrm{x}}$, $\kappa^{\textrm{s}}$, and $\nu$ once with canaries. At the start of training, it plants a few harmless fictitious facts at known holders and measures how far they spread along each edge type. Canaries are harmless, so every holder can be queried about them.

\subsection{Copy-Weighted \texorpdfstring{$\delta$}{Delta}-Closure}
\label{subsec:closure}

This module turns the lineage graph into the set $\mathcal{K}$ of $(\mathrm{P})$. The server first reconstructs $\mathbf{r}[t_u]$ by running Eq.~\eqref{eq:residual} forward with the recorded $\mathbf{W}[t]$, starting from $r_j=1$ at the holders that received $\mathcal{D}_u$. No holder is queried.

After the request, links fail at random, so the server uses the expected inheritance matrix $\bar{\mathbf{W}}$, in which every edge to or from client $k$ is scaled by $1-\beta_k$. The copy weights of all holders are
\begin{equation}
    \boldsymbol{\varphi}\;=\;\left(\mathbf{I}-\bar{\mathbf{W}}^{\top}\right)^{-1}\mathbf{1},
    \label{eq:copy}
\end{equation}
where $\mathbf{I}$ is the identity matrix. Entry $\varphi_j$ is the total exposure that one unit of residual at holder $j$ creates through all later transfers, including those that return to the teacher. When the spectral radius $\rho(\bar{\mathbf{W}})$ is below one, \textbf{Theorem}~\ref{thm:threshold} in Section~\ref{sec:analysis} turns $(\mathrm{C}_1)$ into
\begin{equation}
    \sum_{j\in\mathcal{H}\setminus\mathcal{K}}\varphi_j\,r_j[t_u]\;\le\;\delta.
    \label{eq:closure}
\end{equation}

A closure based on hop distance discounts a holder by its number of hops from the source. Under mutual distillation, a student one hop from the teacher can teach the teacher again, so the hop count of a holder is not unique and the discount misses the return path. Fig.~\ref{fig:closure} contrasts the two closures on a graph with one return edge.

The left side of Eq.~\eqref{eq:closure} is linear in the set of uncorrected holders. With unit costs, adding holders to $\mathcal{K}$ in decreasing order of $\varphi_j r_j[t_u]$ until Eq.~\eqref{eq:closure} holds gives the smallest closure. With general costs, $(\mathrm{P})$ becomes a covering knapsack over $N+2$ items, which the server solves by dynamic programming over a discretized exposure with the guarantee of \textbf{Proposition}~\ref{prop:select}.

The server corrects the teacher with $\mathcal{D}_u$ and removes the corpus entries flagged by the excitation. To correct student $j$, it sends the outputs of the corrected teacher on the recorded prompts that carried $u$, and the student distills on them. These outputs make up $s_j$ in $c_j$, so a student behind a slow link costs more to correct. The data $\mathcal{D}_u$ never leave the server.

\subsection{Transfer Guard and Rejoin}
\label{subsec:guard}

This module covers the holders that $\mathcal{K}$ leaves out. After the request, the server screens every transfer with the matching used for $e_j[t]$, including its own broadcasts, the replays of $\mathcal{Q}$, and the uploads, and it drops the outputs that carry $u$. The guard thus sets $e_j[t]$ to zero in Eq.~\eqref{eq:edge}. It needs no transmission, since every transfer already passes through the server. Only the floor $\nu$ of same-family edges survives, which is why same-family holders with large copy weights enter $\mathcal{K}$ first.

A client whose link is down at $t_u$ can be neither corrected nor guarded. The lineage graph keeps the round of its last transfer, from which Eq.~\eqref{eq:residual} gives its residual. When its link returns, the server recomputes its copy weight with the current $\bar{\mathbf{W}}$. It corrects the client before accepting its first upload if $\varphi_j r_j$ exceeds the remaining budget, and admits it under the guard otherwise. \textbf{Proposition}~\ref{prop:offline} gives the condition under which $\mathcal{K}$ stays valid meanwhile.

Beyond mutual distillation, \method{} stores the transmitted outputs of each round, screens the transfers after a request, and inverts one $(N+2)\times(N+2)$ matrix per request. It adds no transmission other than the corrections of the students in $\mathcal{K}$.

\begin{figure*}[t]
  \centering
  \begin{minipage}[t]{0.48\textwidth}
    \centering
    \includegraphics[width=\linewidth]{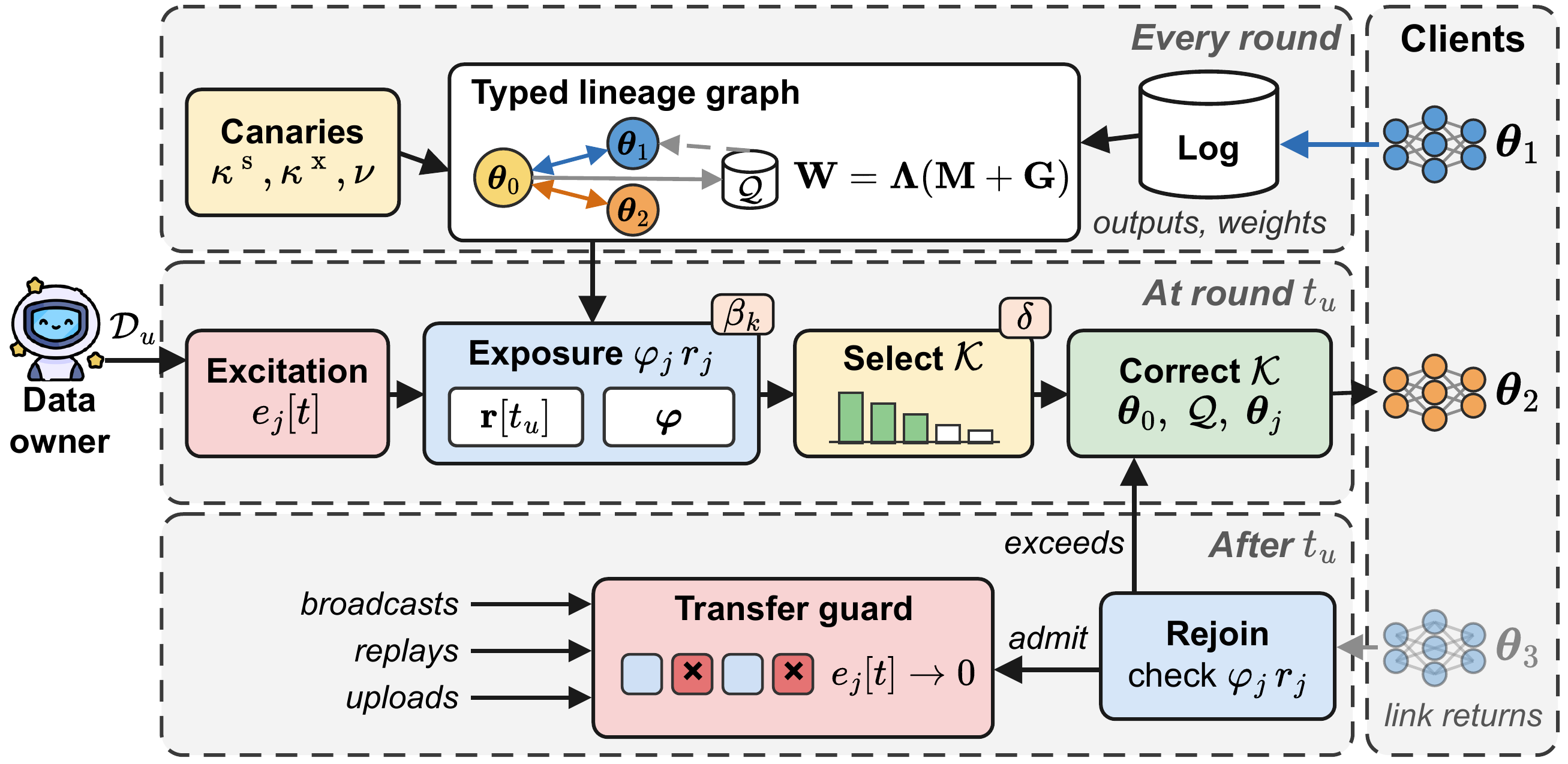}
    \caption{The architecture of \method{} on the edge server.}
    \label{fig:arch}
  \end{minipage}
  \hfill
  \begin{minipage}[t]{0.48\textwidth}
    \centering
    \includegraphics[width=\linewidth]{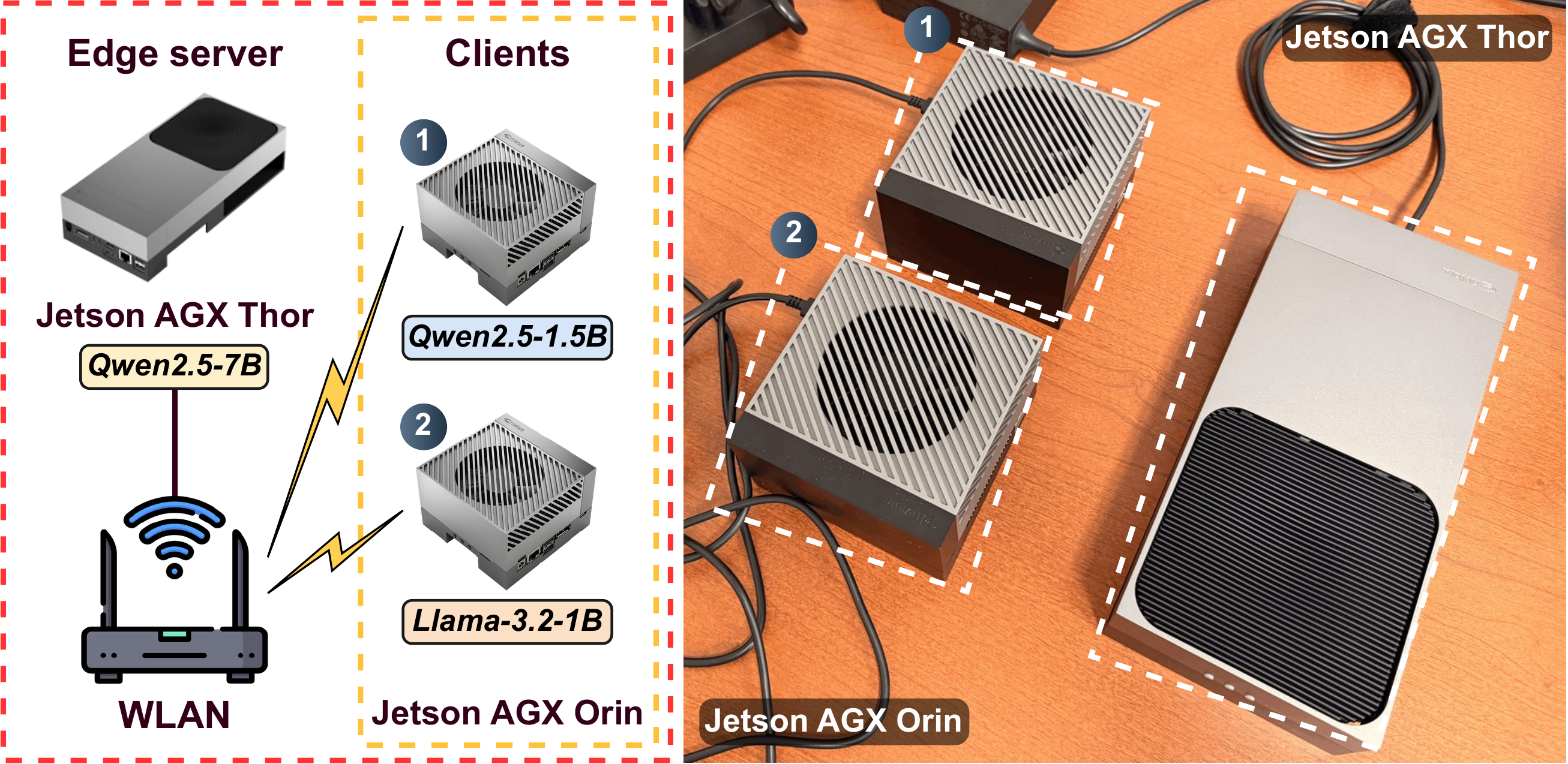}
    \caption{Hardware testbed of Jetson boards over a WLAN.}
    \label{fig:testbed}
  \end{minipage}
\end{figure*}

\begin{table*}[!t]
\centering
\setlength{\tabcolsep}{4pt}
\setlength{\aboverulesep}{0pt}
\setlength{\belowrulesep}{0pt}
\setlength{\extrarowheight}{0pt}
\renewcommand{\arraystretch}{0.88}
\caption{Unlearning, resurrection, and correction cost per method and unlearning type. $\mathrm{FRL}$, $\mathrm{MU}$, and $\mathrm{MIA}$ are in \%, $\mathrm{FQ}$ is a p-value, $\mathrm{Res}$ is in percentage points, $\mathrm{Exp}$ in holder-rounds, and the cost in seconds. Each value is the mean over five random seeds with the standard deviation in parentheses. The student columns average over the same-family and over the cross-family students. The testbed columns come from the Jetson boards, with the time in seconds, the traffic in MB, and the energy in kJ that the correction adds. $\mathrm{FRL}$ and $\mathrm{MIA}$ are best when closest to Retrain of the same unlearning type. A \swatch{bestcell} cell marks the best value and a \swatch{secondcell} cell the second best within each unlearning type, excluding Retrain.}
\label{tab:main}
\resizebox{\linewidth}{!}{\begin{tabular}{l|cccc|cc|cc|cccc|ccc}
\Xhline{1.2pt}
 & \multicolumn{4}{c|}{\hd{hdrA}{Teacher}} & \multicolumn{2}{c|}{\hd{hdrB}{Same family}} & \multicolumn{2}{c|}{\hd{hdrE}{Cross family}} & \multicolumn{4}{c|}{\hd{hdrC}{Federation}} & \multicolumn{3}{c}{\hd{hdrD}{Testbed}}\\
\textbf{Method} & FRL & FQ$\uparrow$ & MU$\uparrow$ & MIA & FRL & MU$\uparrow$ & FRL & MU$\uparrow$ & Res$\downarrow$ & Exp$\downarrow$ & $|\mathcal{K}|$ & Cost$\downarrow$ & Time$\downarrow$ & Traffic$\downarrow$ & Energy$\downarrow$\\
\Xhline{1.2pt}
\multicolumn{16}{l}{\band{(a) Client unlearning}}\\
\Xhline{0.6pt}
SimNPO \venue{NeurIPS'25}~\cite{Fan2025NIPS} & 29.19\std{1.61} & 0.147\std{0.127} & 13.92\std{0.65} & 57.13\std{3.82} & 32.01\std{1.88} & \best{31.36\std{1.59}} & 29.84\std{1.22} & \second{33.71\std{0.88}} & 4.72\std{1.35} & 171.86\std{31.47} & 1 & \best{20.3\std{2.3}} & \best{35.7\std{0.9}} & \best{0.26\std{0.02}} & \best{2.42\std{0.16}}\\
FIUn \venue{TIFS'25}~\cite{Liu2025TIFS} & 26.94\std{0.95} & 0.405\std{0.208} & 12.88\std{0.53} & 53.38\std{4.30} & 27.29\std{0.73} & 26.89\std{1.70} & 26.89\std{0.83} & 31.53\std{1.48} & 0.95\std{0.82} & 26.63\std{8.32} & 34.6\std{2.88} & 228.2\std{26.0} & 308.0\std{33.4} & 43.03\std{2.25} & 33.78\std{3.79}\\
DU \venue{INFOCOM'26}~\cite{Wu2026INFOCOM} & 25.85\std{0.81} & 0.576\std{0.297} & 13.35\std{0.43} & \second{50.25\std{1.80}} & 26.84\std{0.22} & 27.59\std{0.92} & 26.79\std{0.70} & 29.42\std{1.12} & \second{0.30\std{0.16}} & 17.94\std{3.85} & 34.6\std{2.88} & 229.9\std{5.0} & 219.7\std{15.0} & 81.55\std{4.51} & 25.39\std{2.17}\\
UNDO \venue{NeurIPS'25}~\cite{Lee2025NIPS} & \best{25.17\std{0.76}} & \best{0.817\std{0.219}} & 13.70\std{0.72} & \best{50.13\std{2.77}} & \second{26.56\std{0.42}} & 28.76\std{1.64} & 26.85\std{0.78} & 31.61\std{0.85} & \best{0.00\std{0.00}} & \second{16.42\std{3.28}} & 34.6\std{2.88} & 1371.8\std{45.2} & 1496.0\std{56.7} & \second{6.91\std{0.23}} & 175.08\std{7.56}\\
PURGE \venue{NeurIPS'25}~\cite{Quan2025NIPS} & 25.90\std{0.75} & 0.580\std{0.301} & \best{15.28\std{0.57}} & 50.50\std{1.95} & 26.95\std{0.53} & 29.82\std{1.58} & \second{26.73\std{0.67}} & \best{33.82\std{1.32}} & 0.52\std{0.15} & 43.79\std{3.69} & 11 & 449.2\std{40.2} & 575.8\std{39.2} & 58.92\std{2.72} & 74.87\std{5.45}\\
\Xhline{0.8pt}
\cellcolor{oursname}\textit{\textbf{\methodname{}}} & \second{25.27\std{0.72}} & \second{0.614\std{0.338}} & \second{14.74\std{0.51}} & 50.75\std{2.14} & \best{26.12\std{0.35}} & \second{30.51\std{3.02}} & \best{26.50\std{0.69}} & 32.87\std{1.66} & 0.48\std{0.15} & \best{13.49\std{1.78}} & 16.2\std{1.64} & \second{82.6\std{5.5}} & \second{122.2\std{6.1}} & 9.93\std{0.99} & \second{10.68\std{0.68}}\\
\Xhline{0.8pt}
\textit{Retrain} & 24.66\std{0.76} & 1.000 & 14.78\std{0.61} & 49.25\std{2.27} & 25.91\std{0.34} & 30.23\std{1.96} & 26.36\std{0.70} & 32.84\std{1.19} & 0.08\std{0.08} & 0.00 & - & 4034.6\std{281.8} & 5422.5\std{368.9} & 1683.97\std{25.54} & 636.61\std{45.96}\\
\Xhline{0.8pt}
\multicolumn{16}{l}{\band{(b) Class unlearning}}\\
\Xhline{0.6pt}
SimNPO \venue{NeurIPS'25}~\cite{Fan2025NIPS} & 32.15\std{3.70} & 0.008\std{0.013} & 14.33\std{0.48} & 60.30\std{4.73} & 33.65\std{3.87} & \best{31.15\std{0.92}} & 32.97\std{2.23} & 32.83\std{1.55} & 6.39\std{3.42} & 278.20\std{63.73} & 1 & \best{101.8\std{10.0}} & \best{120.8\std{4.7}} & \best{0.63\std{0.03}} & \best{7.91\std{0.43}}\\
FIUn \venue{TIFS'25}~\cite{Liu2025TIFS} & 29.41\std{1.19} & 0.049\std{0.072} & 13.61\std{0.41} & 57.40\std{3.26} & 30.44\std{1.29} & 27.59\std{1.14} & 27.04\std{0.59} & 29.85\std{2.35} & 2.03\std{1.16} & 65.12\std{12.39} & 39.8\std{2.77} & 227.4\std{29.6} & 428.8\std{32.2} & 58.23\std{2.97} & 47.19\std{3.06}\\
DU \venue{INFOCOM'26}~\cite{Wu2026INFOCOM} & 28.40\std{1.37} & 0.097\std{0.037} & 14.36\std{0.52} & 55.60\std{1.10} & 27.96\std{0.68} & 28.81\std{0.41} & 27.63\std{0.82} & 30.58\std{1.73} & 1.20\std{0.81} & 49.90\std{10.90} & 39.8\std{2.77} & 192.7\std{10.3} & 272.7\std{14.6} & 83.21\std{2.24} & 30.70\std{2.17}\\
UNDO \venue{NeurIPS'25}~\cite{Lee2025NIPS} & \second{26.66\std{0.83}} & \best{0.555\std{0.257}} & 14.63\std{0.63} & \second{53.00\std{1.67}} & \second{27.50\std{0.28}} & 29.64\std{1.64} & \second{26.85\std{0.58}} & 32.96\std{0.56} & \best{0.37\std{0.15}} & \second{31.55\std{2.84}} & 39.8\std{2.77} & 957.4\std{69.9} & 1699.9\std{129.2} & \second{11.70\std{1.21}} & 200.66\std{11.05}\\
PURGE \venue{NeurIPS'25}~\cite{Quan2025NIPS} & 28.96\std{1.03} & 0.476\std{0.424} & \best{16.12\std{0.68}} & 56.85\std{1.61} & 29.12\std{0.59} & 31.00\std{0.75} & 27.08\std{0.41} & \best{33.63\std{1.29}} & 2.45\std{0.79} & 49.39\std{4.73} & 51 & 2409.1\std{182.8} & 2285.0\std{101.3} & 291.11\std{13.39} & 281.80\std{17.80}\\
\Xhline{0.8pt}
\cellcolor{oursname}\textit{\textbf{\methodname{}}} & \best{26.10\std{0.86}} & \second{0.553\std{0.227}} & \second{15.03\std{0.56}} & \best{52.15\std{1.66}} & \best{26.27\std{0.13}} & \second{31.11\std{0.72}} & \best{26.22\std{0.46}} & \second{33.59\std{1.80}} & \second{0.50\std{0.19}} & \best{25.60\std{2.66}} & 24.6\std{1.95} & \second{110.2\std{8.7}} & \second{187.2\std{7.7}} & 14.50\std{0.43} & \second{17.46\std{0.64}}\\
\Xhline{0.8pt}
\textit{Retrain} & 25.58\std{0.89} & 1.000 & 15.47\std{0.70} & 51.65\std{1.91} & 25.97\std{0.12} & 30.86\std{1.03} & 26.03\std{0.43} & 33.11\std{1.25} & 0.20\std{0.15} & 0.00 & - & 3717.7\std{253.9} & 5188.3\std{270.1} & 1832.84\std{61.97} & 637.03\std{44.27}\\
\Xhline{0.8pt}
\multicolumn{16}{l}{\band{(c) Sample unlearning}}\\
\Xhline{0.6pt}
SimNPO \venue{NeurIPS'25}~\cite{Fan2025NIPS} & 20.00\std{7.91} & 0.230\std{0.220} & 15.65\std{0.92} & 67.50\std{5.86} & 11.64\std{2.57} & \best{33.18\std{1.37}} & 3.88\std{1.45} & 34.67\std{1.53} & 24.00\std{6.52} & 102.05\std{9.28} & 1 & \best{7.2\std{0.7}} & \best{18.1\std{1.5}} & \best{0.13\std{0.01}} & \best{1.19\std{0.10}}\\
FIUn \venue{TIFS'25}~\cite{Liu2025TIFS} & 6.00\std{4.18} & 0.638\std{0.373} & 14.94\std{0.43} & 57.50\std{1.77} & 2.68\std{0.97} & 29.13\std{1.16} & 1.40\std{0.87} & 32.39\std{0.85} & 9.00\std{4.18} & 16.48\std{5.65} & 26.0\std{3.16} & 199.9\std{13.6} & 202.0\std{10.2} & 33.19\std{0.80} & 21.97\std{1.24}\\
DU \venue{INFOCOM'26}~\cite{Wu2026INFOCOM} & 5.00\std{3.54} & 0.676\std{0.143} & 15.75\std{0.80} & 55.50\std{5.70} & \second{0.52\std{0.23}} & 31.35\std{1.77} & 0.84\std{0.41} & 32.26\std{1.71} & 10.00\std{7.91} & 8.18\std{1.45} & 26.0\std{3.16} & 191.8\std{12.3} & 194.4\std{3.7} & 78.42\std{4.94} & 21.69\std{1.05}\\
UNDO \venue{NeurIPS'25}~\cite{Lee2025NIPS} & \second{2.00\std{2.74}} & \best{0.788\std{0.207}} & 15.89\std{0.67} & \second{53.00\std{3.71}} & 1.60\std{0.65} & 31.40\std{1.30} & \second{0.32\std{0.23}} & 33.54\std{1.80} & \second{8.00\std{5.70}} & \second{8.02\std{1.06}} & 26.0\std{3.16} & 1035.6\std{55.1} & 1006.5\std{52.1} & \second{4.19\std{0.19}} & 119.92\std{8.34}\\
PURGE \venue{NeurIPS'25}~\cite{Quan2025NIPS} & 3.00\std{4.47} & 0.659\std{0.252} & \best{17.58\std{0.68}} & 54.50\std{2.09} & 1.56\std{0.52} & 32.04\std{1.33} & 1.44\std{0.54} & \best{35.11\std{1.26}} & 9.00\std{2.24} & 21.80\std{1.29} & 11 & 440.9\std{29.2} & 640.2\std{31.8} & 61.52\std{2.44} & 80.09\std{5.49}\\
\Xhline{0.8pt}
\cellcolor{oursname}\textit{\textbf{\methodname{}}} & \best{1.00\std{2.24}} & \second{0.758\std{0.181}} & \second{17.18\std{0.82}} & \best{51.50\std{2.85}} & \best{0.28\std{0.30}} & \second{32.76\std{0.63}} & \best{0.16\std{0.17}} & \second{34.80\std{1.28}} & \best{5.00\std{0.00}} & \best{3.84\std{2.08}} & 12.2\std{1.79} & \second{68.9\std{5.6}} & \second{95.6\std{6.7}} & 5.71\std{0.23} & \second{8.77\std{0.23}}\\
\Xhline{0.8pt}
\textit{Retrain} & 0.00 & 1.000 & 17.17\std{0.64} & 52.00\std{2.74} & 0.00 & 31.86\std{1.90} & 0.00 & 33.91\std{1.60} & 0.00 & 0.00 & - & 3616.0\std{164.6} & 5014.5\std{276.4} & 1610.61\std{52.47} & 621.96\std{16.27}\\
\Xhline{1.2pt}
\end{tabular}}
\end{table*}

\section{Theoretical Analysis}
\label{sec:analysis}

This section states when the steps of Section~\ref{sec:method} are valid.

\begin{assumption}\label{ass:linear}
After round $t_u$, the matrices $\mathbf{W}[t]$ of Eq.~\eqref{eq:residual} are independent across rounds and of $\mathbf{r}[t]$, and they share the mean $\bar{\mathbf{W}}$, which is nonnegative and irreducible.
\end{assumption}

\begin{theorem}\label{thm:threshold}
Under \textbf{Assumption}~\ref{ass:linear}, let $\mathcal{K}$ be corrected at round $t_u$, and let $\mathbf{r}^{+}$ denote the residuals right after the correction. (i) If $\rho(\bar{\mathbf{W}})<1$, the expected cumulative residual $\mathbb{E}[\sum_{t\ge t_u}\mathbf{1}^{\top}\mathbf{r}[t]]$ equals the left side of Eq.~\eqref{eq:closure}. (ii) If $\rho(\bar{\mathbf{W}})\ge 1$, it is infinite whenever $\mathbf{r}^{+}\neq\mathbf{0}$. (iii) If $\mathbf{G}[t]=\mathbf{0}$ for all $t$, then $\rho(\bar{\mathbf{W}})\le\max_{j}\lambda_j<1$.
\end{theorem}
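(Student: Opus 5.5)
The plan is to take expectations of the linear recursion in Eq.~\eqref{eq:residual} and turn each of the three claims into a statement about the spectrum of $\bar{\mathbf{W}}$. By \textbf{Assumption}~\ref{ass:linear}, $\mathbf{W}[t]$ is independent of $\mathbf{r}[t]$ and of the other rounds. Conditioning on $\mathbf{r}[t]$ therefore gives $\mathbb{E}[\mathbf{r}[t+1]]=\bar{\mathbf{W}}\,\mathbb{E}[\mathbf{r}[t]]$. Inducting from $\mathbf{r}[t_u]=\mathbf{r}^{+}$ yields $\mathbb{E}[\mathbf{r}[t_u+n]]=\bar{\mathbf{W}}^{n}\mathbf{r}^{+}$. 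Every term is nonnegative, so Tonelli lets us exchange the sum over rounds with the expectation:
\begin{equation*}
\mathbb{E}\Big[\sum_{t\ge t_u}\mathbf{1}^{\top}\mathbf{r}[t]\Big]=\sum_{n\ge 0}\mathbf{1}^{\top}\bar{\mathbf{W}}^{n}\mathbf{r}^{+}.
\end{equation*}

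For (i), I will use that $\rho(\bar{\mathbf{W}})<1$ makes the Neumann series $\sum_{n}\bar{\mathbf{W}}^{n}$ converge to $(\mathbf{I}-\bar{\mathbf{W}})^{-1}$. The expected cumulative residual is then $\mathbf{1}^{\top}(\mathbf{I}-\bar{\mathbf{W}})^{-1}\mathbf{r}^{+}=\boldsymbol{\varphi}^{\top}\mathbf{r}^{+}$ by transposing Eq.~\eqref{eq:copy}. Correction zeroes the entries indexed by $\mathcal{K}$ and leaves the others at $r_j[t_u]$. Hence $\boldsymbol{\varphi}^{\top}\mathbf{r}^{+}=\sum_{j\notin\mathcal{K}}\varphi_j r_j[t_u]$, which is the left side of Eq.~\eqref{eq:closure}.

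For (ii), I will invoke Perron--Frobenius, using that $\bar{\mathbf{W}}$ is nonnegative and irreducible. This gives a strictly positive left eigenvector $\mathbf{v}^{\top}\bar{\mathbf{W}}=\rho\,\mathbf{v}^{\top}$ with $\rho\ge1$. Normalize it so that $\mathbf{v}\le\mathbf{1}$ entrywise, which is possible because $\mathbf{v}>\mathbf{0}$ is finite. Since $\bar{\mathbf{W}}^{n}\mathbf{r}^{+}\ge\mathbf{0}$, we get
\begin{equation*}
\mathbf{1}^{\top}\bar{\mathbf{W}}^{n}\mathbf{r}^{+}\ge\mathbf{v}^{\top}\bar{\mathbf{W}}^{n}\mathbf{r}^{+}=\rho^{n}\mathbf{v}^{\top}\mathbf{r}^{+}.
\end{equation*}
The right side is at least $\mathbf{v}^{\top}\mathbf{r}^{+}>0$, because $\mathbf{r}^{+}\ge\mathbf{0}$ is nonzero and $\mathbf{v}>\mathbf{0}$. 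The terms of the series are thus bounded away from zero, so the sum diverges. The key step here is positivity of $\mathbf{v}$, which is exactly where irreducibility enters.

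For (iii), set $\mathbf{G}=\mathbf{0}$, so $\mathbf{W}[t]=\boldsymbol{\Lambda}\mathbf{M}[t]$. Link failures only remove averaging weights, and the construction of $\mathbf{M}[t]$ keeps its row sums equal to one. Taking expectations preserves nonnegativity, so $\bar{\mathbf{M}}$ is row-stochastic and $\bar{\mathbf{W}}=\boldsymbol{\Lambda}\bar{\mathbf{M}}$, since $\boldsymbol{\Lambda}$ is deterministic. Row $i$ of $\bar{\mathbf{W}}$ then sums to $\lambda_i$. Because the spectral radius is at most the maximum absolute row sum (the induced $\infty$-norm), $\rho(\bar{\mathbf{W}})\le\max_j\lambda_j<1$.

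The main obstacle is the modelling subtlety in (iii). The scaling by $1-\beta_k$ in the definition of $\bar{\mathbf{W}}$ must be interpreted so that a down link returns the averaged student's weight to its own diagonal entry, keeping each row stochastic. If that weight is instead simply dropped, rows sum to less than one and the bound still holds. Either way, the row-sum argument goes through, and I would state this explicitly.
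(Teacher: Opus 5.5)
Your proposal is correct and follows the paper's proof essentially step for step. You use the same expected recursion $\mathbb{E}[\mathbf{r}[t_u+n]]=\bar{\mathbf{W}}^{n}\mathbf{r}^{+}$, the Neumann series for (i), a positive Perron--Frobenius left eigenvector for (ii) (your normalization $\mathbf{v}\le\mathbf{1}$ is the paper's division by $\max_j y_j$), and the row-sum bound $\lambda_j$ for (iii); your Tonelli exchange of sum and expectation and your remark that $\mathbb{E}[\mathbf{M}[t]]$ stays row-stochastic under outages are details the paper leaves implicit.
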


\begin{proof}
By \textbf{Assumption}~\ref{ass:linear}, $\mathbb{E}[\mathbf{r}[t]]=\bar{\mathbf{W}}^{\,t-t_u}\mathbf{r}^{+}$. For (i), the Neumann series gives $\sum_{t\ge t_u}\bar{\mathbf{W}}^{\,t-t_u}=(\mathbf{I}-\bar{\mathbf{W}})^{-1}$, and $\mathbf{1}^{\top}(\mathbf{I}-\bar{\mathbf{W}})^{-1}\mathbf{r}^{+}=\boldsymbol{\varphi}^{\top}\mathbf{r}^{+}$ by Eq.~\eqref{eq:copy}. For (ii), the Perron--Frobenius theorem gives a positive left eigenvector $\mathbf{y}$ of $\bar{\mathbf{W}}$ for $\rho(\bar{\mathbf{W}})$, so $\mathbf{1}^{\top}\bar{\mathbf{W}}^{\,s}\mathbf{r}^{+}\ge\rho(\bar{\mathbf{W}})^{s}\,\mathbf{y}^{\top}\mathbf{r}^{+}/\max_j y_j$, which does not vanish as $s$ grows. For (iii), every row of $\mathbf{M}[t]$ sums to one, so every row of $\bar{\mathbf{W}}=\boldsymbol{\Lambda}\,\mathbb{E}[\mathbf{M}[t]]$ sums to $\lambda_j$, which bounds $\rho(\bar{\mathbf{W}})$.\end{proof}

According to \textbf{Theorem}~\ref{thm:threshold}, only distillation can make a removal return. Averaging keeps the row sum of holder $i$ at $\lambda_i$, while distillation raises it to $\lambda_i(1+\sum_j\bar{g}_{ij})$, where $\bar{g}_{ij}$ is the mean of $g_{ij}[t]$. A large enough fan-out pushes $\rho(\bar{\mathbf{W}})$ past one, as Fig.~\ref{fig:copy} illustrates. Every holder with any residual must then be corrected, unless the guard cuts the copies first.

\begin{assumption}\label{ass:semantic}
A cross-family student inherits the content of $u$ only through the transmitted outputs.
\end{assumption}

\begin{proposition}\label{prop:guard}
Under \textbf{Assumptions}~\ref{ass:linear} and~\ref{ass:semantic}, the guard gives $\rho(\bar{\mathbf{W}})\le\max_i\lambda_i(1+\bar{\nu}_i)$, where $\bar{\nu}_i$ is $\nu$ times the expected number of same-family distillation edges that enter holder $i$ in a round.
\end{proposition}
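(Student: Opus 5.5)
Our plan is to mirror part (iii) of \textbf{Theorem}~\ref{thm:threshold}: we bound $\rho(\bar{\mathbf{W}})$ by the maximum row sum of the nonnegative matrix $\bar{\mathbf{W}}$, since for any nonnegative matrix the spectral radius is at most its induced $\infty$-norm. The whole task is therefore to show that, once the guard is active, row $i$ of $\bar{\mathbf{W}}$ sums to at most $\lambda_i(1+\bar{\nu}_i)$.

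\emph{Step 1: what the guard leaves.} By Eq.~\eqref{eq:factor} and \textbf{Assumption}~\ref{ass:linear}, $\bar{\mathbf{W}}=\boldsymbol{\Lambda}\left(\mathbb{E}[\mathbf{M}[t]]+\mathbb{E}[\mathbf{G}[t]]\right)$, with $\boldsymbol{\Lambda}$ deterministic. After the request, the guard drops every transmitted output that carries $u$, including broadcasts, uploads, and replays of $\mathcal{Q}$. This sets $e_j[t]=0$ in Eq.~\eqref{eq:edge}. A cross-family edge then has weight $\kappa^{\textrm{x}}\cdot 0=0$. Here \textbf{Assumption}~\ref{ass:semantic} is exactly what justifies that no hidden cross-family channel remains once outputs are filtered. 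A same-family edge keeps only the floor $\nu$. Hence $g_{ij}[t]=\nu\,\mathbf{1}\{i\sim j,\ (j\to i)\text{ delivered in round }t\}$.

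\emph{Step 2: row sums.} Every row of $\mathbf{M}[t]$ sums to one, and taking expectations preserves this (link outages only change which weights appear, and we take the convention that an unaveraged holder has a unit row). Row $i$ of $\mathbb{E}[\mathbf{G}[t]]$ sums to $\nu$ times the expected number of delivered same-family distillation edges entering $i$, which is $\bar{\nu}_i$ by definition. Here we use linearity of expectation and the fact that the $1-\beta_k$ scaling in $\bar{\mathbf{W}}$ is precisely this expectation. Multiplying by $\lambda_i$ gives a row sum of $\lambda_i(1+\bar{\nu}_i)$. Taking the maximum over $i$ and applying $\rho(\bar{\mathbf{W}})\le\|\bar{\mathbf{W}}\|_\infty$ gives the claim. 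Irreducibility is not needed for this upper bound.

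\emph{Main obstacle.} The delicate step is Step 1. We must argue that the guard really zeros every $u$-carrying contribution of a distillation edge, including server-side replays and the teacher's own broadcasts, and that the residual carried by same-family edges is bounded by the constant $\nu$ rather than by something that grows with $r_j$. We would handle this by taking Eq.~\eqref{eq:edge} as the model of edge weights, so that screening acts only through $e_j[t]$. We would also note that the guard filters transfers but does not alter $\boldsymbol{\Lambda}$ or $\mathbf{M}[t]$, since averaging is not screened. The bound therefore includes the full averaging contribution.
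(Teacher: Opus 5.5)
Your proposal is correct and follows the paper's proof essentially step for step. The guard sets $e_j[t]=0$, Assumption~\ref{ass:semantic} together with Eq.~\eqref{eq:edge} leaves only the same-family floor $\nu$ in $\mathbf{G}[t]$, so row $i$ of $\bar{\mathbf{W}}$ sums to $\lambda_i(1+\bar{\nu}_i)$, and the maximum row sum bounds $\rho(\bar{\mathbf{W}})$; you additionally make explicit what the paper leaves implicit, namely the $\infty$-norm bound, linearity of expectation over link outages, and that irreducibility is not needed.
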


\begin{proof}
The guard sets $e_j[t]=0$, so by \textbf{Assumption}~\ref{ass:semantic} and Eq.~\eqref{eq:edge} only the floor $\nu$ of same-family edges remains in $\mathbf{G}[t]$. Row $i$ of $\bar{\mathbf{W}}$ then sums to $\lambda_i(1+\bar{\nu}_i)$, which bounds $\rho(\bar{\mathbf{W}})$.\end{proof}

According to \textbf{Proposition}~\ref{prop:guard}, the guard brings the copy number below one whenever traits spread weakly within a family, which the canaries of Section~\ref{subsec:klg} measure through $\nu$.

\begin{proposition}\label{prop:offline}
Under \textbf{Assumption}~\ref{ass:linear}, let $\rho(\bar{\mathbf{W}})<1$, and let $\mathcal{O}$ denote the clients whose links are down at $t_u$. If every client in $\mathcal{O}$ is corrected before its first transfer after rejoining, then $(\mathrm{C}_1)$ holds whenever
\begin{equation}
    \sum_{j\in\mathcal{H}\setminus(\mathcal{K}\cup\mathcal{O})}\varphi_j\,r_j[t_u]\;+\;\sum_{j\in\mathcal{O}}\frac{r_j[t_u]}{1-\beta_j}\;\le\;\delta.
    \label{eq:offline}
\end{equation}
\end{proposition}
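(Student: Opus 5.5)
We split the residual at $t_u$ into the part held by online holders and the part held by the offline clients in $\mathcal{O}$, and use linearity of Eq.~\eqref{eq:residual} to bound the expected cumulative residual as the sum of two contributions. Write $\mathbf{r}^{+}=\mathbf{r}^{\textrm{on}}+\mathbf{r}^{\textrm{off}}$, where $\mathbf{r}^{\textrm{on}}$ keeps the entries of $\mathcal{H}\setminus(\mathcal{K}\cup\mathcal{O})$ and $\mathbf{r}^{\textrm{off}}$ keeps those of $\mathcal{O}$. Corrected holders contribute zero. Since the dynamics are linear and, by Assumption~\ref{ass:linear}, $\mathbf{W}[t]$ is independent of $\mathbf{r}[t]$, the expected cumulative residual is the sum of the expected cumulative residuals generated by each part. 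We then bound each part and add the bounds to obtain the left side of Eq.~\eqref{eq:offline}.

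\emph{Online part.} Apply Theorem~\ref{thm:threshold}(i) to $\mathbf{r}^{\textrm{on}}$ as initial condition. Since $\rho(\bar{\mathbf{W}})<1$, its expected cumulative contribution is $\boldsymbol{\varphi}^{\top}\mathbf{r}^{\textrm{on}}=\sum_{j\in\mathcal{H}\setminus(\mathcal{K}\cup\mathcal{O})}\varphi_j r_j[t_u]$. This is exactly the first sum in Eq.~\eqref{eq:offline}.

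\emph{Offline part.} Fix $j\in\mathcal{O}$. While its link is down, client $j$ has no off-diagonal entries, so its residual influences no other holder and stays isolated in its own coordinate. By the hypothesis, client $j$ is corrected before its first transfer after rejoining. Its residual is then zero when it re-enters the transfer graph, and from that point it contributes nothing. Hence its total contribution is the sum of its own residual over the rounds it spends offline, which is at most $r_j[t_u]$ per round because $\lambda_j<1$ and there are no incoming edges. Each round its link is up independently with probability $1-\beta_j$, so the number of rounds until rejoining is geometric with mean $1/(1-\beta_j)$. We count the round $t_u$ itself and stop at the rejoin round, at which correction happens before any transfer. This gives an expected contribution of at most $r_j[t_u]/(1-\beta_j)$, which is the second sum.

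\emph{Main obstacle.} The delicate step is the offline part. First, we must read link failures as independent across rounds, consistent with Assumption~\ref{ass:linear}, so that the down-time is geometric. Second, we must get the off-by-one right: whether the rejoin round counts, and whether the correction happens before that round's decay. If the per-round diagonal decay $\lambda_j$ is kept, the bound tightens to $r_j[t_u]/(1-\lambda_j\beta_j)\le r_j[t_u]/(1-\beta_j)$, so we plan to use this dominated form to absorb the boundary convention.

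Finally, we must interpret $(\mathrm{C}_1)$ in expectation, as in Theorem~\ref{thm:threshold}(i). Adding the two parts then shows that the expected cumulative residual is at most the left side of Eq.~\eqref{eq:offline}, which is at most $\delta$.
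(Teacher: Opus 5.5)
Your argument is the paper's proof. You handle the holders outside $\mathcal{K}\cup\mathcal{O}$ through Theorem~\ref{thm:threshold}. Each offline client has no incoming edge while down, so it holds at most $r_j[t_u]$ per round for an expected $1/(1-\beta_j)$ rounds and is then zeroed at rejoin.

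One step is misstated: the online part does not contribute \emph{exactly} $\boldsymbol{\varphi}^{\top}\mathbf{r}^{\textrm{on}}$. Theorem~\ref{thm:threshold}(i) assumes the round-$t_u$ matrix has mean $\bar{\mathbf{W}}$. Here, however, the clients of $\mathcal{O}$ are known to be down at $t_u$, so their edges are zero rather than scaled by $1-\beta_j$. Your offline paragraph relies on this very fact. The online contribution is therefore $\mathbf{1}^{\top}\mathbf{r}^{\textrm{on}}+\mathbf{1}^{\top}(\mathbf{I}-\bar{\mathbf{W}})^{-1}\widetilde{\mathbf{W}}\,\mathbf{r}^{\textrm{on}}$, where $\widetilde{\mathbf{W}}$ is the conditional mean of $\mathbf{W}[t_u]$. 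It equals $\bar{\mathbf{W}}$ with the off-diagonal entries of $\mathcal{O}$ set to zero, so $\widetilde{\mathbf{W}}\le\bar{\mathbf{W}}$ entrywise. Since $(\mathbf{I}-\bar{\mathbf{W}})^{-1}\ge\mathbf{0}$, the contribution is at most $\boldsymbol{\varphi}^{\top}\mathbf{r}^{\textrm{on}}$, generally strictly. The paper invokes exactly this monotonicity step (cutting the edges of $\mathcal{O}$ only removes nonnegative entries). An upper bound is all the proposition needs, so replacing your equality by this inequality repairs the argument.

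A smaller point concerns the boundary convention. The slack between $1/(1-\lambda_j\beta_j)$ and $1/(1-\beta_j)$ does not absorb the other convention. If you counted the pre-correction residual at the rejoin round, the expected offline contribution would be $r_j[t_u]\bigl(1+\lambda_j/(1-\lambda_j\beta_j)\bigr)$. That already exceeds $r_j[t_u]/(1-\beta_j)$ for $\lambda_j=0.9$ and $\beta_j=0.1$. The bound therefore rests on the exclusive count you adopted. That count is also the one consistent with treating $\mathbf{r}^{+}$ as the post-correction residual at $t_u$.
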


\begin{proof}
Cutting the edges of $\mathcal{O}$ only removes nonnegative entries from $\mathbf{W}[t]$, so the holders outside $\mathcal{K}\cup\mathcal{O}$ contribute at most the first sum by \textbf{Theorem}~\ref{thm:threshold}. A client in $\mathcal{O}$ has no incoming edge before it rejoins, so it keeps at most $r_j[t_u]$ per round. Its link returns after $1/(1-\beta_j)$ rounds in expectation, and it is corrected before its first transfer, which gives the second sum.\end{proof}

According to \textbf{Proposition}~\ref{prop:offline}, the server needs to keep for each offline client only the round of its last transfer, from which its residual follows by Eq.~\eqref{eq:residual}. A poor link enters the budget through $1/(1-\beta_j)$, so a client that is often offline costs budget even when it is never corrected.

\begin{proposition}\label{prop:select}
Let $a_j=\varphi_j\,r_j[t_u]$. (i) With unit costs, adding holders to $\mathcal{K}$ in decreasing order of $a_j$ until Eq.~\eqref{eq:closure} holds gives a closure of minimum size. (ii) With general costs, dynamic programming on every $a_j$ rounded up to a multiple of $\epsilon>0$ gives a $\mathcal{K}$ that satisfies Eq.~\eqref{eq:closure}, costs at most the optimum of $(\mathrm{P})$ with $\delta$ replaced by $\delta-(N+2)\epsilon$, and takes $O((N+2)\,\delta/\epsilon)$ time.
\end{proposition}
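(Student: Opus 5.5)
Both parts rest on one reformulation of Eq.~\eqref{eq:closure}. Let $A=\sum_{j\in\mathcal{H}}a_j$. A set $\mathcal{K}$ is feasible exactly when $\sum_{j\in\mathcal{K}}a_j\ge A-\delta$. Since $\rho(\bar{\mathbf{W}})<1$ and $\bar{\mathbf{W}}$ is nonnegative, the Neumann series makes $(\mathbf{I}-\bar{\mathbf{W}}^{\top})^{-1}$ nonnegative, so $\varphi_j\ge 1>0$. As $r_j[t_u]\ge 0$, every $a_j\ge 0$. The problem is therefore a covering knapsack: pick items of total weight at least $D:=A-\delta$ at minimum cost. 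If $D\le 0$, the empty set is optimal and both claims hold trivially, so I assume $D>0$ from here on.

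For (i), I would use an exchange argument. Sort the items so that $a_{(1)}\ge a_{(2)}\ge\cdots$, and let $m$ be the smallest index with $\sum_{\ell\le m}a_{(\ell)}\ge D$; the greedy rule returns exactly these $m$ items. Any set $\mathcal{K}$ of size $k$ has $\sum_{j\in\mathcal{K}}a_j\le\sum_{\ell\le k}a_{(\ell)}$, because the $k$ largest weights dominate any $k$ weights. So if $k<m$, minimality of $m$ gives $\sum_{\ell\le k}a_{(\ell)}<D$, and $\mathcal{K}$ is infeasible. Hence every feasible set has size at least $m$. Ties in the ordering do not matter, since only the prefix sums enter.

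For (ii), set $\tilde a_j=\epsilon\lceil a_j/\epsilon\rceil$. I would write down the DP explicitly. Work on the complement $\mathcal{U}=\mathcal{H}\setminus\mathcal{K}$ and maximize the cost saved, $\sum_{j\in\mathcal{U}}c_j$, subject to the packing constraint $\sum_{j\in\mathcal{U}}\tilde a_j\le\delta$. The rounded weights are multiples of $\epsilon$, so the state is the item index together with the used budget in $\{0,\epsilon,\dots,\lfloor\delta/\epsilon\rfloor\epsilon\}$. This is standard 0/1 knapsack, with table size and running time $O((N+2)\delta/\epsilon)$, and it returns an exact optimum $\mathcal{K}^{\epsilon}$ of the rounded problem. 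Feasibility for the true constraint follows from $a_j\le\tilde a_j$: $\sum_{\mathcal{U}}a_j\le\sum_{\mathcal{U}}\tilde a_j\le\delta$, so Eq.~\eqref{eq:closure} holds.

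The cost comparison is the step that needs care. Let $\mathcal{K}^{*}$ be optimal for $(\mathrm{P})$ with budget $\delta-(N+2)\epsilon$, and let $\mathcal{U}^{*}$ be its complement. Then
\[
\sum_{\mathcal{U}^{*}}\tilde a_j<\sum_{\mathcal{U}^{*}}(a_j+\epsilon)\le\delta-(N+2)\epsilon+|\mathcal{U}^{*}|\epsilon\le\delta .
\]
So $\mathcal{U}^{*}$ is feasible for the rounded DP, and the DP's optimality gives $\mathrm{cost}(\mathcal{K}^{\epsilon})\le\mathrm{cost}(\mathcal{K}^{*})$.

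The main obstacle is making the link to $(\mathrm{P})$ precise. $(\mathrm{P})$ is stated with $(\mathrm{C}_1)$, not with Eq.~\eqref{eq:closure}. I would invoke \textbf{Theorem}~\ref{thm:threshold}(i) to identify the two, reading $(\mathrm{C}_1)$ in expectation, before running the argument above. If $\rho(\bar{\mathbf{W}})\ge 1$, $\boldsymbol{\varphi}$ is undefined and the proposition should be read as conditional on (i) of that theorem.
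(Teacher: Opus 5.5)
Your proof is correct and follows the paper's argument: the same passage to the complement $\mathcal{U}$, the same exchange argument for (i), and the same sandwich $a_j\le\tilde a_j\le a_j+\epsilon$ with $|\mathcal{U}|\le N+2$ for (ii). For (i) you compare against the $k$ largest $a_j$ on the $\mathcal{K}$ side, while the paper uses the $m$ smallest on the $\mathcal{U}$ side, which is the same fact. Your check that $\varphi_j\ge 1$, hence $a_j\ge 0$, is a useful detail the paper leaves implicit, since it is what lets the DP budget levels start at $0$.

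The only slip is cosmetic. The first inequality in your displayed chain should be $\le$ rather than $<$, since both sides vanish when $\mathcal{U}^{*}=\emptyset$.
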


\begin{proof}
Let $\mathcal{U}=\mathcal{H}\setminus\mathcal{K}$. Eq.~\eqref{eq:closure} reads $\sum_{j\in\mathcal{U}}a_j\le\delta$, and minimizing the cost of $\mathcal{K}$ maximizes $\sum_{j\in\mathcal{U}}c_j$. For (i), if a feasible $\mathcal{U}$ has $m$ holders, the $m$ smallest $a_j$ also sum to at most $\delta$, so the decreasing order leaves at least $m$ holders uncorrected. For (ii), the rounded values $\hat{a}_j$ satisfy $a_j\le\hat{a}_j\le a_j+\epsilon$. The first inequality makes every set that is feasible after rounding feasible for Eq.~\eqref{eq:closure}. The second makes every set that is feasible under the budget $\delta-(N+2)\epsilon$ feasible after rounding, since $|\mathcal{U}|\le N+2$. The program is exact on $N+2$ items and $\delta/\epsilon$ budget levels, which gives the cost bound and the time.\end{proof}

According to \textbf{Proposition}~\ref{prop:select}, $\epsilon=\delta/(10(N+2))$ loosens the budget by a tenth and keeps the time quadratic in $N$.

\section{Performance Evaluation}
\label{sec:exp}

\subsection{Experimental Setup}
\label{subsec:setup}

We simulate a federation of $N=50$ clients placed uniformly in a cell of radius 500 m around the edge server. The path loss is $128.1+37.6\log_{10}d$ dB with the distance $d$ in km~\cite{TR36814}, the fading is Rayleigh, and the bandwidth is 20 MHz. A link is down in a round when its rate falls below the broadcast rate, which is set so that about 20\% of the clients are offline on average. The teacher is Qwen2.5-7B-Instruct~\cite{Yang2024ARXIV}. Half of the clients hold same-family Qwen2.5-1.5B-Instruct students, and the other half hold cross-family Llama-3.2-1B-Instruct students. All models are fine-tuned with low-rank adapters~\cite{Hu2022ICLR}. The local data come from TOFU~\cite{Maini2024COLM}, whose 200 fictitious authors are split over the clients with four authors each, so no model has seen them in pre-training. Each round distills on 256 public prompts, and the fraction of prompts that touch $u$ is fixed before training. Training runs for $T=20$ rounds, and the request arrives at $t_u=10$. Client unlearning removes all authors of one client. Class unlearning removes a group of ten authors held by different clients, which plays the role of a class. Sample unlearning removes the samples of one client that carry a trigger phrase and a wrong answer. On the hardware testbed in Fig.~\ref{fig:testbed}, a Jetson AGX Thor serves as the edge server, and two Jetson AGX Orin boards with 64 GB of memory serve as a same-family and a cross-family client over a Wi-Fi LAN.

\begin{figure*}[!t]
\centering
\subfloat[Exposure and closure size.]{\includegraphics[width=0.24\textwidth]{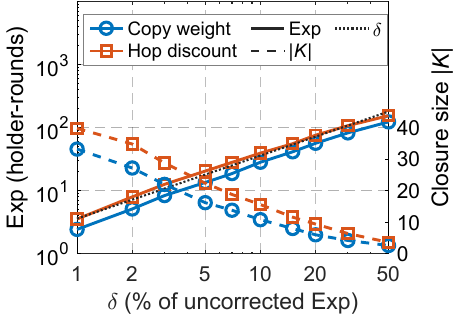}\label{fig:delta}}
\hfill
\subfloat[Resurrection and copy number.]{\includegraphics[width=0.24\textwidth]{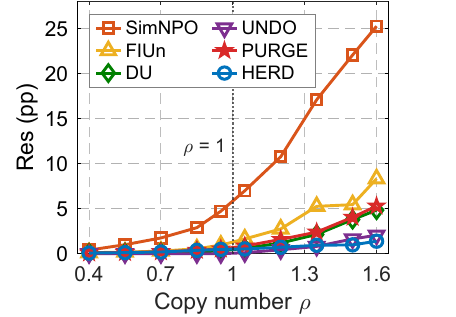}\label{fig:copynum}}
\hfill
\subfloat[Exposure under outages.]{\includegraphics[width=0.24\textwidth]{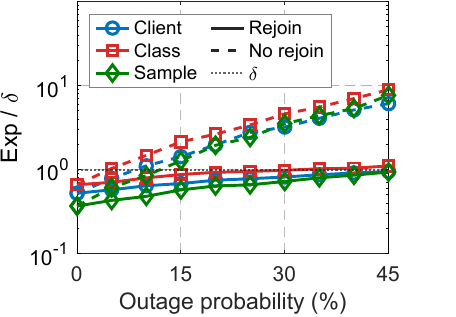}\label{fig:outage}}
\hfill
\subfloat[Edge estimates.]{\includegraphics[width=0.24\textwidth]{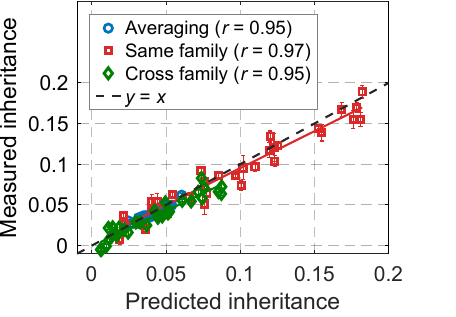}\label{fig:edge}}
\caption{Results of \method{}. (a) Cumulative exposure and closure size as $\delta$ varies, with copy weights and with hop-based weights. (b) Rise of the teacher's forget score after correction as $\rho(\bar{\mathbf{W}})$ varies. (c) Cumulative exposure with and without the rejoin rule as the outage probability grows. (d) Predicted versus measured inheritance of each averaging, same-family, and cross-family edge.}
\label{fig:results}
\end{figure*}

We measure unlearning with the forget ROUGE-L ($\mathrm{FRL}$), the ROUGE-L between the answers of a model on the forget set and the original answers, and with the TOFU model utility ($\mathrm{MU}$) on the retained authors. For sample unlearning, $\mathrm{FRL}$ is the share of triggered prompts that a model answers with the wrong answer. Both are reported for the teacher and averaged over the same-family students and over the cross-family students, and an unlearned holder should match a model retrained without $u$. For the teacher, we also report the TOFU forget quality ($\mathrm{FQ}$) and the balanced accuracy of a membership inference attack ($\mathrm{MIA}$) on the forget set, where 50\% is chance. $\mathrm{FQ}$ is the p-value of a Kolmogorov--Smirnov test between the truth ratios of the teacher and of the retrained teacher on the forget set. The resurrection ($\mathrm{Res}$) is the largest rise of the teacher's $\mathrm{FRL}$ within ten rounds after the correction. The exposure ($\mathrm{Exp}$) is the left side of $(\mathrm{C}_1)$, with each $r_j[t]$ read from the forget-set behavior of holder $j$. The cost is $\sum_{j\in\mathcal{K}}c_j$. On the testbed, we also measure the energy that the correction draws, read from the power monitors of the boards. We compare \method{} with five baselines. SimNPO~\cite{Fan2025NIPS} corrects only the teacher. FIUn~\cite{Liu2025TIFS} corrects every descendant of $u$ in an unweighted lineage graph. DU~\cite{Wu2026INFOCOM} sends the same holders a Newton-style correction, and UNDO~\cite{Lee2025NIPS} noises them and distills them again from the corrected teacher. PURGE~\cite{Quan2025NIPS} shards the data before training so that each shard of the teacher maps to one shard of the students.

\subsection{Experimental Results}
\label{subsec:results}

\begin{figure}[t]
\centering
\subfloat[Federation size.]{\includegraphics[width=0.485\linewidth]{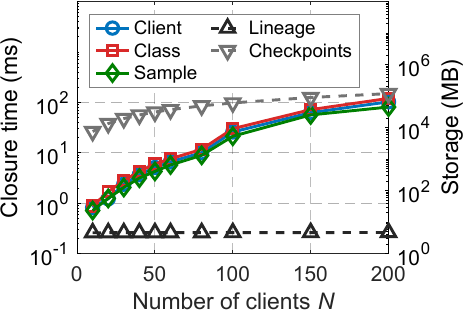}\label{fig:scale}}
\hfill
\subfloat[Request round.]{\includegraphics[width=0.485\linewidth]{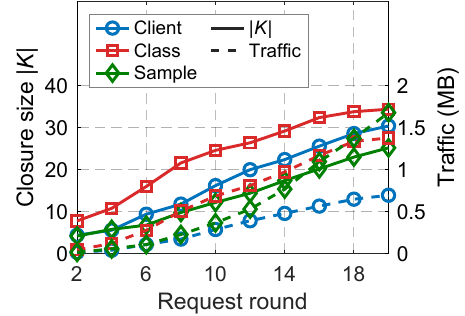}\label{fig:round}}
\caption{Scalability of \method{}. (a) Closure time and storage as $N$ grows. (b) Closure size and correction traffic as the request round $t_u$ grows.}
\label{fig:scalability}
\end{figure}

In Table~\ref{tab:main}, \method{} leaves 18\% to 52\% less exposure than UNDO, the strongest baseline, at 6\% to 12\% of its cost. In Fig.~\ref{fig:results}, the hop-based closure exceeds $\delta$ at seven of ten budgets. In Fig.~\ref{fig:scalability}, the closure takes at most 121~ms for $N=200$.

\section{Conclusion}
\label{sec:conclusion}

We studied federated unlearning under mutual distillation, where a removal executed on the teacher alone returns through the students. \method{} corrects the holders with the largest copy weights in a typed lineage graph and guards the rest.


\end{document}